\tikzstyle{gate}=[shape=rectangle, text height=1.5ex, text depth=0.25ex, yshift=0.5mm, fill=white, draw=black, minimum height=5mm, yshift=-0.5mm, minimum width=5mm, font={\small}, tikzit category=circuit]
\tikzstyle{big gate}=[shape=rectangle, text height=1.5ex, text depth=0.25ex, yshift=0.5mm, fill=white, draw=black, minimum height=10mm, yshift=-0.5mm, minimum width=5mm, font={\small}, tikzit category=circuit]
\tikzstyle{Z dot}=[inner sep=0mm, minimum size=2mm, shape=circle, draw=black, fill={rgb,255: red,221; green,255; blue,221}, tikzit category=zx]
\tikzstyle{place}=[inner sep=0mm, minimum size=6mm, shape=circle, draw=black, fill=white, tikzit category=zx]
\tikzstyle{token}=[inner sep=0mm, minimum size=1mm, shape=circle, draw=black, fill=black, tikzit category=zx]
\tikzstyle{Z box}=[inner sep=0mm, minimum size=4mm, shape=rectangle, draw=black, fill={rgb,255: red,221; green,255; blue,221}, tikzit category=zx]
\tikzstyle{Z phase dot}=[minimum size=5mm, font={\footnotesize\boldmath}, shape=rectangle, rounded corners=2mm, inner sep=0.2mm, outer sep=-2mm, scale=0.8, tikzit shape=circle, draw=black, fill={rgb,255: red,221; green,255; blue,221}, tikzit draw=blue, tikzit category=zx]
\tikzstyle{X dot}=[Z dot, shape=circle, draw=black, fill={rgb,255: red,255; green,136; blue,136}, tikzit category=zx]
\tikzstyle{X phase dot}=[Z phase dot, tikzit shape=circle, tikzit draw=blue, fill={rgb,255: red,255; green,136; blue,136}, font={\footnotesize\boldmath}, tikzit category=zx]
\tikzstyle{hadamard}=[fill=yellow, draw=black, shape=rectangle, inner sep=0.6mm, minimum height=1.5mm, minimum width=1.5mm, tikzit category=zx]
\tikzstyle{paulibox}=[fill={rgb,255: red,221; green,221; blue,255}, draw=black, shape=rectangle, inner sep=0.6mm, minimum height=5mm, minimum width=5mm, font={\footnotesize}, text height=1.5ex, text depth=0.25ex, tikzit category=zx]
\tikzstyle{vertex}=[inner sep=0mm, minimum size=1mm, shape=circle, draw=black, fill=black, tikzit category=misc]
\tikzstyle{vertex set}=[inner sep=0mm, minimum size=1mm, shape=circle, draw=black, fill=white, font={\footnotesize\boldmath}, tikzit category=misc]
\tikzstyle{small black dot}=[fill=black, draw=black, shape=circle, inner sep=0pt, minimum width=1.2mm, tikzit category=circuit]
\tikzstyle{cnot ctrl}=[fill=black, draw=black, shape=circle, inner sep=0pt, minimum width=1.2mm, tikzit category=circuit]
\tikzstyle{cnot targ}=[fill=white, draw=white, shape=circle, tikzit category=circuit, label={center:$\oplus$}, inner sep=0pt, minimum width=2.1mm, tikzit fill={rgb,255: red,102; green,204; blue,255}, tikzit draw=black]
\tikzstyle{ket}=[fill=white, draw=black, shape=regular polygon, regular polygon sides=3, regular polygon rotate=-30, scale=0.7, inner sep=1pt, tikzit category=circuit, tikzit shape=rectangle, tikzit fill=green]
\tikzstyle{bra}=[fill=white, draw=black, shape=regular polygon, regular polygon sides=3, regular polygon rotate=30, scale=0.7, inner sep=1pt, tikzit category=circuit, tikzit shape=rectangle, tikzit fill=red]
\tikzstyle{scalar}=[shape=rectangle, text height=1.5ex, text depth=0.25ex, yshift=0.5mm, fill=white, draw=black, minimum height=5mm, yshift=-0.5mm, minimum width=5mm, font={\small}]
\tikzstyle{clabel}=[fill=white, draw=none, shape=rectangle, tikzit fill={rgb,255: red,56; green,255; blue,242}, font={\footnotesize}, inner sep=1pt, tikzit category=labels]
\tikzstyle{empty diagram}=[draw={gray!40!white}, dashed, shape=rectangle, minimum width=1cm, minimum height=1cm, tikzit category=misc]
\tikzstyle{amap}=[fill=white, draw=black, shape=NEbox, minimum height=1cm, minimum width=0.75cm, tikzit category=asymmetric, tikzit fill=yellow, tikzit shape=rectangle]
\tikzstyle{amap conj}=[fill=white, draw=black, shape=NWbox, tikzit category=asymmetric, tikzit fill=green, tikzit shape=rectangle]
\tikzstyle{amap adj}=[fill=white, draw=black, shape=SEbox, tikzit category=asymmetric, tikzit fill=red, tikzit shape=rectangle]
\tikzstyle{amap trans}=[fill=white, draw=black, shape=SWbox, tikzit category=asymmetric, tikzit fill=orange, tikzit shape=rectangle]
\tikzstyle{astate}=[fill=white, draw=black, shape=NEtriangle, tikzit category=asymmetric, tikzit shape=circle, tikzit fill=yellow]
\tikzstyle{astate conj}=[fill=white, draw=black, shape=NWtriangle, tikzit category=asymmetric, tikzit shape=circle, tikzit fill=green]
\tikzstyle{astate adj}=[fill=white, draw=black, shape=SEtriangle, tikzit category=asymmetric, tikzit shape=circle, tikzit fill=red]
\tikzstyle{astate trans}=[fill=white, draw=black, shape=SWtriangle, tikzit category=asymmetric, tikzit shape=circle, tikzit fill=orange]
\tikzstyle{box}=[fill=white, draw=black, shape=rectangle, minimum width=1.5cm, minimum height=1.2cm, tikzit category=circuit]
\tikzstyle{long box}=[fill=white, draw=black, shape=rectangle, minimum width=1.2cm, minimum height=2.2cm, tikzit category=circuit]
\tikzstyle{lw}=[shape=isosceles triangle, isosceles triangle stretches=true, fill=black, draw=black, minimum width=0.4cm, minimum height=3mm, inner sep=1pt, shape border rotate=180]
\tikzstyle{rw}=[shape=isosceles triangle, isosceles triangle stretches=true, fill=black, draw=black, minimum width=0.4cm, minimum height=3mm, inner sep=1pt]
\tikzstyle{dw}=[shape=isosceles triangle, isosceles triangle stretches=true, fill=black, draw=black, minimum width=0.4cm, minimum height=3mm, inner sep=1pt, shape border rotate=-90]
\tikzstyle{uw}=[shape=isosceles triangle, isosceles triangle stretches=true, fill=black, draw=black, minimum width=0.4cm, minimum height=3mm, inner sep=1pt, shape border rotate=90]
\tikzstyle{cause}=[fill={rgb,255: red,250; green,192; blue,203}, draw=black, shape=NEbox, minimum height=0.6cm, minimum width=0.8cm, tikzit category=asymmetric, tikzit fill=yellow, tikzit shape=rectangle]
\tikzstyle{MYmap}=[fill={rgb,255: red,250; green,192; blue,203}, draw=black, shape=NEbox, minimum height=0.6cm, minimum width=0.8cm, tikzit category=asymmetric, tikzit fill=yellow, tikzit shape=rectangle]
\tikzstyle{get}=[fill={rgb,255: red,150; green,222; blue,209}, draw=black, shape=NEbox, minimum height=0.6cm, minimum width=0.8cm, tikzit category=asymmetric, tikzit fill=yellow, tikzit shape=rectangle]
\tikzstyle{set}=[fill={rgb,255: red,255; green,255; blue,143}, draw=black, shape=NEbox, minimum height=0.6cm, minimum width=0.8cm, tikzit category=asymmetric, tikzit fill=yellow, tikzit shape=rectangle]
\tikzstyle{state}=[fill={rgb,255: red,224; green,176; blue,255}, draw=black, shape=NEbox, minimum height=0.6cm, minimum width=0.8cm, tikzit category=asymmetric, tikzit fill=yellow, tikzit shape=rectangle]
\tikzstyle{hadamard edge}=[-, dashed, dash pattern=on 2pt off 0.5pt, thick, draw={rgb,255: red,68; green,136; blue,255}]
\tikzstyle{box edge}=[-, dashed, dash pattern=on 2pt off 0.5pt, thick, draw={rgb,255: red,203; green,192; blue,225}]
\tikzstyle{brace edge}=[-, tikzit draw=blue, decorate, decoration={brace,amplitude=1mm,raise=-1mm}]
\tikzstyle{diredge}=[->]
\tikzstyle{double edge}=[-, double, shorten <=-1mm, shorten >=-1mm, double distance=2pt]
\tikzstyle{gray edge}=[-, {gray!60!white}]
\tikzstyle{pointer edge}=[->, very thick, gray]
\tikzstyle{boldedge}=[-, line width=1.6pt, shorten <=-0.17mm, shorten >=-0.17mm]
\tikzstyle{bidir edge}=[<->, very thick, draw={rgb,255: red,191; green,191; blue,191}]
\bfseries\color{DarkBlue}}
\bfseries\color{DarkSlateBlue}}
\theoremstyle{definition}
\newtheorem{definition}{Definition}[section]
\newtheorem{theorem}{Theorem}[section]
\newtheorem{lemma}{Lemma}[section]
\newtheorem{corollary}{Corollary}[section]
\newtheorem{proposition}{Proposition}[section]
\theoremstyle{remark}
\definecolor{quantumblue}{RGB}{0,100,200}
\definecolor{quantumgreen}{RGB}{0,150,100}
\title{\textbf{\Large Non-Commutation Chains in Pre- and Post-Selection Paradoxes}\\
       \vspace{1em}
       \normalsize This paper is part of a dissertation submitted in fulfillment of the requirements of a degree
      Master of Science in Mathematics and Foundations of Computer Science \\
       \vspace{0.5em}
       University of Oxford, Trinity Term 2024}
\author{Ouissal Moumou\\
        \small Mathematical Institute\\
        \small University of Oxford\\
        \small \texttt{ouissalmoumou2@gmail.com}}
\date{\today}
\begin{document}

\maketitle

\begin{abstract}
    Peculiar measurements can be obtained on systems that undergo both pre- and post-selection. We prove a conjecture from \cite{nick-paper} on logical Pre- and Post-Selection (PPS) paradoxes for a restricted case. We prove that all of these paradoxes admit non-commutation chains. We also relate this to the theory of causal balance, recently introduced in \cite{nick-paper}, and show how the theory blocks such paradoxes.
\end{abstract}

\tableofcontents
\newpage
\section{Introduction} \label{sec:intro}
Suppose a quantum system is pre-selected in the state $|\psi \rangle$ at time $t_0$ and post-selected in the state $\langle \phi|$ at time $t_2$. Suppose also that the system undergoes an intermediate projective measurement $\{ \mathit{P}\}$ at time $t_1$. One could calculate the probability of the intermediate measurement by conditioning both on the pre- and post-selection, this is the main spirit behind the ABL rule introduced in \cite{abl-aharonov}. 

For a degenerate operator $C$ at time $t$, the probability of obtaining an outcome $c_n$ is specified by the ABL rule as
\begin{equation}
\label{abl-rule}
    P(C) = \frac{|\langle \phi| P_{C = c_n} |\psi \rangle|^{2}}{\sum_{i}|\langle \phi|P_{C = c_i}|\psi \rangle|^{2}},
\end{equation}
where $P_{C = c_i} = \sum_{i} |\Phi \rangle \langle \Phi|$ is the projection on the states with eigenvalue $c_i$.

Peculiar situations arise for specific choices of $|\psi\rangle$, $\{ \mathit{P}\}$, and $\langle \phi|$ in which the probabilities obtained are non-intuitive. These situations are referred to as pre- and post-selection paradoxes. 

\cite{nick-paper} posited that these paradoxes are absent within the framework of causal balance theory, as this theoretical framework inherently avoids what we term ``non-commutation chains'' in the present work. A non-commutation chain, which we define rigorously in \Cref{sec:chains}, arises when the intermediate projection operator in a pre- and post-selection scenario fails to commute with both the pre-selected and post-selected quantum states. We establish a proof for a restricted subset of Ormrod's conjecture from \cite{nick-paper}, demonstrating that all paradoxes within this constrained class exhibit a systematic occurrence of non-commutation chains.

We start with the quantum pigeonhole principle as an exemplar of these paradoxes to provide intuitive understanding, followed by a formal definition of pre- and post-selection phenomena that establishes the theoretical foundation for our subsequent proof. We conclude with a discussion of the implications and directions for future research.

\subsection{The Quantum Pigeonhole Principle Paradox}
\cite{pigeon} introduced the quantum pigeonhole principle in which $3$ particles are prepared in a superposition state of being in $2$ different boxes. Boxes $1$ and $2$ are represented by $|0\rangle$ and $|1\rangle$ respectively. A superposition of being in both boxes $1$ and $1$ would then be represented more conveniently with $|+\rangle = \frac{1}{\sqrt{2}} (|0\rangle + |1\rangle)$. Therefore, the states corresponding to the pre- and post-selection states are
\begin{align*}
    | \psi \rangle &= {|+\rangle}_1 {|+\rangle}_2 {|+\rangle}_3, \\
    \langle \phi | &= {\langle i |}_1 {\langle i |}_2 {\langle i |}_3,
\end{align*}
respectively, where $\langle i| = \frac{1}{\sqrt{2}} (\langle 0 | + i\langle 1|)$.

We aim to check whether two particles are in the same box. Since the three particles are in the same initial and final states, we will only demonstrate the paradox for particles 1 and 2. By symmetry, the same applies for particles 2 and 3 and particles 1 and 3. Particles 1 and 2 being in the same box means that they are either both in box 1, or both in box 2 which we represent with projectors $P_{11} ={|0\rangle}_1 {|0\rangle}_2 {\langle 0|}_1 {\langle 0|}_2$ and $P_{22} = {|1\rangle}_1 {|1\rangle}_2 {\langle 1|}_1 {\langle 1|}_2$ respectively. On the other hand, particles 1 and 2 being in separate boxes means particle 1 is in box 1 and particle 2 is in box 2, or vice versa, which we represent with projectors $P_{12} = {|0\rangle}_1 {|1\rangle}_2 {\langle 1|}_1 {\langle 0|}_2$ and $P_{21} = {|1\rangle}_1 {|0\rangle}_2 {\langle 1|}_1 {\langle 0|}_2$ respectively.

Consequently, the two particles under consideration being in the same box corresponds to the projector $P_\text{same} = P_{11} + P_{22}$. Similarly, the two particles being in different boxes correspond to the projector $P_\text{diff} = P_{12} + P_{21}$. Because the value of $\langle \phi | P_\text{same} \ket{\psi}$ turns out to be zero, it follows that the probability of finding particles 1 and 2 in the same box using the ABL rule is also zero. Considering the symmetry in our example with the other particles, this means that the probability of finding particles 2 and 3 in the same box in the intermediate measurement is also zero, and so is the probability of finding both particles 1 and 3 in the same box. One then is prompted to conclude that it is with certainty that we observe that no two particles can be found in the same box. However, recalling the very basic yet powerful pigeonhole principle, it is quite peculiar to have two boxes and three particles. Yet, no two particles share the same box! 

\section{Pre- and Post-Selection Paradoxes: A Formal Definition} \label{sec:paradoxes}
Below is a formal definition of a pre- and post-selection paradox inspired from \cite{contexuality}. Consider a Hilbert space, a choice of pre-selection $|\psi\rangle$ and post-selection $\langle\phi|$, and let $\mathcal{P}$ be a finite set of projectors closed under complements composed of the projectors related to the pre- and post-selection system $\{P,I - P\}$ that are uniquely determined by the projectors $P$. We only consider the cases in which the ABL probabilities corresponding to using the projectors $\{P,I - P\}$ yield 0 or 1 values (hence the ``logical'' part).

    We would like to be able to generate a partial Boolean algebra $\mathcal{P'}$ from $\mathcal{P}$ using the following for any $P$ and $Q$ in $\mathcal{P'}$:
    \begin{itemize}
        \item If $P, Q \in \mathcal{P'}$ and $PQ = QP$, then $PQ \in \mathcal{P'}$,
        \item If $P \in \mathcal{P'}$, then $I - P \in \mathcal{P'}$.
    \end{itemize}

    One intuitive way of understanding the new partial Boolean algebra is by considering projectors corresponding to propositions, and the extension to the partial Boolean algebra is our way of wanting to also take disjunctions and conjunctions of the propositions at hand (we originally only had the propositions and their complements given to us by the experiments).
    Suppose we want to extend the probability function $f$ given to us by the ABL rule from $\mathcal{P}$ to $\mathcal{P'}$. In other words, we want to find a probability function on $\mathcal{P'}$ that recovers the probability function defined on $\mathcal{P}$, such that the following algebraic conditions are satisfied:
    \begin{enumerate}[label=(\roman*)]
        \item For all $P \in \mathcal{P'}$, $0 \leq f(P) \leq 1$,
        \item $f(I) = 1$ and $f(0) = 0$,
        \item For all $P, Q \in \mathcal{P'}$, $f(P + Q - PQ) = f(P) + f(Q) - f(QP)$.
    \end{enumerate} 
    
\begin{definition}
\label{def-pps}
(PPS Paradox)
Assuming the above setting, we say that the ABL predictions for $\mathcal{P}$ form a logical PPS paradox when we fail to find a function that fails one or more of the algebraic conditions above.
\end{definition}

Applying this to the case of the pigeonhole principle paradox: the projectors corresponding to two particles being in different boxes $P_{\text{diff}_{1, 2}}$, $P_{\text{diff}_{1, 2}}$, and $P_{\text{diff}_{1, 2}}$. We know that: $f(P_{\text{diff}_{1, 2}}) = 1$, $f(P_{\text{diff}_{1, 2}}) = 1$, and $f(P_{\text{diff}_{1, 2}}) = 1$.

We know that $p(e) = 1$ and $p(f) = 1 \implies p(e \land f) = 1$. This can be extended in our case of three events, and knowing that the probabilities for these events are all $1$, we can conclude that
    \begin{equation*}
        f(P_{\text{diff}_{1, 2}} P_{\text{diff}_{2, 3}} P_{\text{diff}_{1, 3}}) = 1.
    \end{equation*}

However, note that: $P_{\text{diff}_{1, 2}} P_{\text{diff}_{2, 3}} P_{\text{diff}_{1, 3}} = 0$. Therefore, $f(0) = 1$, which violates condition $(i)$ from \Cref{def-pps}. Thus, the quantum pigeonhole principle is just another instance of logical PPS paradoxes.

\section{PPS Paradoxes Admit Non-Commutation Chains} \label{sec:chains}
We start with the following definition of a non-commutation chain:
\begin{definition} (Non-Commutation Chain)
\label{non-commutation chain}
A projector $P$ has a non-commutation chain with two other projectors $Q_1$ and $Q_2$ iff $Q_1 P \neq P Q_1$ and $Q_2 P \neq P Q_2$.
\end{definition}

The following lemma will be useful later and is proved in \Cref{app:proof2}:
\begin{lemma}
\label{P-commutes}
    In the context of a logical pre- and post-selection paradox with pre- and post-selection projectors $P_\psi = |\psi \rangle \langle \psi|$ and $P_\phi = |\phi \rangle \langle \phi|$, respectively, a projector $P$ corresponding to an intermediate measurement at time $t_1 < t < t_2$ does not have a non-commutation chain with $P_\psi$ and $P_\phi$ iff
    \begin{itemize}
        \item $P |\psi \rangle = | \psi \rangle$, and we say that $P$ and $P_\psi$ idempotently commute, or
        \item $P |\psi \rangle = 0$, and we say that $P$ and $P_\psi$ orthogonally commute, or 
        \item $P |\phi \rangle = | \phi \rangle$, and we say that $P$ and $P_\phi$ idempotently commute, or
        \item $P |\phi \rangle = 0$, and we say that $P$ and $P_\phi$ orthogonally commute.
    \end{itemize}
\end{lemma}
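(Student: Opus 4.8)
The plan is to reduce the statement to a clean algebraic fact about when a projector commutes with a rank-one projector; the pre- and post-selection context is not actually used in the core of the argument, since the equivalence is purely about commutation. First I would unfold the negation of \Cref{non-commutation chain}. By that definition, $P$ has a non-commutation chain with $P_\psi$ and $P_\phi$ precisely when $P_\psi P \neq P P_\psi$ and $P_\phi P \neq P P_\phi$ hold simultaneously. Negating and applying De Morgan, $P$ fails to have a non-commutation chain with $P_\psi$ and $P_\phi$ if and only if $P$ commutes with at least one of the two, i.e.\ $P P_\psi = P_\psi P$ or $P P_\phi = P_\phi P$. The target biconditional then follows once I characterise each of these two commutation relations.

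The heart of the argument is the following claim, which I would establish as a self-contained fact: for a normalised state $|\chi\rangle$ with rank-one projector $P_\chi = |\chi\rangle\langle\chi|$, a projector $P$ satisfies $P P_\chi = P_\chi P$ if and only if $|\chi\rangle$ is an eigenvector of $P$, equivalently $P|\chi\rangle = |\chi\rangle$ or $P|\chi\rangle = 0$ (the eigenvalues of a projector being $0$ and $1$). The ``if'' direction is a direct computation: using that $P$ is Hermitian so that $\langle\chi| P = (P|\chi\rangle)^{\dagger}$, the case $P|\chi\rangle = |\chi\rangle$ makes both $P P_\chi$ and $P_\chi P$ equal to $P_\chi$, while the case $P|\chi\rangle = 0$ makes both equal to $0$. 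For the ``only if'' direction I would evaluate the commuting relation on the single vector $|\chi\rangle$: commutativity gives $P|\chi\rangle = \langle\chi| P |\chi\rangle\, |\chi\rangle$, exhibiting $P|\chi\rangle$ as a scalar multiple of $|\chi\rangle$, so $|\chi\rangle$ is an eigenvector, and idempotency $P^2 = P$ forces the eigenvalue into $\{0,1\}$.

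Finally I would assemble the pieces. Applying the claim with $|\chi\rangle = |\psi\rangle$ shows $P P_\psi = P_\psi P$ is equivalent to ``$P|\psi\rangle = |\psi\rangle$ or $P|\psi\rangle = 0$'', and with $|\chi\rangle = |\phi\rangle$ shows $P P_\phi = P_\phi P$ is equivalent to ``$P|\phi\rangle = |\phi\rangle$ or $P|\phi\rangle = 0$''. Substituting these into the disjunction obtained in the first step yields exactly the four listed conditions, completing the biconditional. I expect the main obstacle to be the ``only if'' direction of the key claim, since one must argue that commuting with a rank-one projector forces $|\chi\rangle$ to be a genuine eigenvector rather than merely constraining $P$ on some subspace; the clean way around this is to test the commutation relation on $|\chi\rangle$ itself, after which idempotency of $P$ pins down the eigenvalue and recovers the idempotent/orthogonal dichotomy named in the statement.
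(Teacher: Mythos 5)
Your proposal is correct, and it diverges from the paper's own proof in a way worth noting. The backward direction (each of the four conditions implies commutation) is the same in both: a direct computation using Hermiticity of $P$, so that $\langle\chi|P = (P|\chi\rangle)^\dagger$, showing $PP_\chi$ and $P_\chi P$ both equal $P_\chi$ or both equal $0$. The real difference is in the forward direction. The paper argues that two commuting projectors admit a shared eigenbasis $|e_1\rangle,\dots,|e_n\rangle$, writes $P_\psi = |e_k\rangle\langle e_k|$, and then --- crucially --- also writes $P = |e_j\rangle\langle e_j|$, i.e.\ it assumes the intermediate projector $P$ is itself rank-1, finishing with a case split on $j = k$ versus $j \neq k$. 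You avoid this entirely: you evaluate the identity $PP_\chi = P_\chi P$ on the vector $|\chi\rangle$ itself to obtain $P|\chi\rangle = \langle\chi|P|\chi\rangle\,|\chi\rangle$, and then use idempotency $P^2 = P$ to force the scalar into $\{0,1\}$. This buys two things. First, your argument is more elementary, needing no appeal to simultaneous diagonalization. Second, and more importantly, it is strictly more general: the lemma as stated (and as invoked in the main theorem) concerns arbitrary intermediate projectors $P \in \mathcal{P}$, which are not assumed rank-1 --- only $P_\psi$ and $P_\phi$ are --- and indeed projectors like $P_{\mathrm{same}} = P_{11} + P_{22}$ in the pigeonhole example have rank greater than 1. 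The paper's forward direction is only valid under its unstated rank-1 restriction on $P$, whereas your test-on-$|\chi\rangle$ argument covers the general case; in effect your proof repairs a gap in the paper's own proof rather than merely matching it.
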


\begin{corollary}
\label{idem-ortho}
    Two projectors $P$ and $P_\psi$ idempotently commute (in the sense of lemma \ref{P-commutes}) iff $I - P$ and $P_\psi$ orthogonally commute.
\end{corollary}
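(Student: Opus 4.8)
The plan is to unpack the two named conditions using the characterization supplied by \Cref{P-commutes} and then observe that they coincide through a single linearity step. By the terminology introduced in \Cref{P-commutes}, the hypothesis that $P$ and $P_\psi$ idempotently commute means precisely that $P|\psi\rangle = |\psi\rangle$. Applying the same vocabulary to the complementary projector, the assertion that $I - P$ and $P_\psi$ orthogonally commute means that $(I-P)|\psi\rangle = 0$. So the corollary reduces to showing the equivalence of these two scalar/vector conditions.

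First I would rewrite the orthogonal-commutation condition using linearity: $(I-P)|\psi\rangle = |\psi\rangle - P|\psi\rangle$. This vector vanishes if and only if $P|\psi\rangle = |\psi\rangle$, which is exactly the idempotent-commutation condition. Because the rewriting is an identity, both directions of the biconditional are established at once, and no separate converse argument is required.

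There is no substantive obstacle here: the statement is an immediate algebraic consequence of the equivalences already packaged into \Cref{P-commutes}, and the only point demanding care is bookkeeping, namely applying the labels \emph{idempotently commute} and \emph{orthogonally commute} consistently when one substitutes $P$ versus $I - P$ into the characterizing equations. For completeness I would note at the outset that $I - P$ is itself a projector whenever $P$ is, since $(I-P)^2 = I - 2P + P^2 = I - P$, which justifies referring to its commutation relations with $P_\psi$ in the first place.
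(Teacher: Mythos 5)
Your proposal is correct and follows essentially the same argument as the paper's proof in Appendix B: both hinge on the identity $(I-P)|\psi\rangle = |\psi\rangle - P|\psi\rangle$, with the paper merely writing the two directions of the biconditional separately while you observe they follow at once from the same rewriting. Your added remark that $I-P$ is itself a projector is a small completeness touch the paper omits, but it does not change the substance of the argument.
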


This corollary says the same thing about the relationship between $P$ and $P_\phi$ if they idempotently, or orthogonally commute. 

\begin{theorem} (Non-Commutation Chains for Logical PPS Paradoxes) 
    Consider a pre- and post-selection scenario with corresponding intermediate measurement sets $\mathcal{P}$ and $\mathcal{P'}$ per definition \ref{def-pps} and pre- and post-selection rank-1 projectors $P_{\psi} = | \psi \rangle \langle \psi |$ and $P_{\phi} = | \phi \rangle \langle \phi |$ such that
    \begin{itemize}
        \item $|\psi \rangle$ and $\langle \phi |$ are the pre- and post-selection states,
        \item All the projectors in $\mathcal{P}$ commute,
        \item $\mathcal{P'}$ is a finite set.
    \end{itemize}
    
    Every logical PPS paradox with the specifications above has at least one projector in $\mathcal{P}$ that forms a non-commutation chain with $P_\psi$ and $P_\phi$.
\end{theorem}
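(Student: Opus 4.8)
The plan is to prove the contrapositive: assuming that \emph{no} projector in $\mathcal{P}$ forms a non-commutation chain with $P_\psi$ and $P_\phi$, I will explicitly construct a probability function $f$ on $\mathcal{P}'$ satisfying conditions (i)--(iii), thereby showing the scenario is \emph{not} a logical PPS paradox. Write the generators of $\mathcal{P}$ as commuting projectors $P_1,\dots,P_n$ (finite by hypothesis), and let $v(P_i)\in\{0,1\}$ denote their ABL values. For each $i$ set $Q_i := P_i$ if $v(P_i)=1$ and $Q_i := I-P_i$ if $v(P_i)=0$, so that every $Q_i$ has ABL value $1$. Since $[P,A]=-[I-P,A]$, the property of having a non-commutation chain is invariant under complementation, so each $Q_i$ also has no non-commutation chain with $P_\psi,P_\phi$; moreover the $Q_i$ still pairwise commute.

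Next I apply \Cref{P-commutes} to each $Q_i$. The lemma offers one of four alternatives, but the two ``orthogonal'' ones are incompatible with $v(Q_i)=1$: if $Q_i|\psi\rangle=0$ or $Q_i|\phi\rangle=0$ then $\langle\phi|Q_i|\psi\rangle=0$, forcing the ABL value to be $0$. Hence each $Q_i$ satisfies either $Q_i|\psi\rangle=|\psi\rangle$ (idempotent on the pre-selection) or $\langle\phi|Q_i=\langle\phi|$ (idempotent on the post-selection). Partition the indices accordingly into $S_\psi$ and $S_\phi$, breaking ties arbitrarily.

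The crux is to show that the meet $R:=\prod_{i=1}^{n}Q_i$ is nonzero. Using global commutativity I reorder the product as $R=\bigl(\prod_{i\in S_\phi}Q_i\bigr)\bigl(\prod_{i\in S_\psi}Q_i\bigr)$; each left factor fixes $\langle\phi|$ and each right factor fixes $|\psi\rangle$, so $\langle\phi|R|\psi\rangle=\langle\phi|\psi\rangle$. Since the ABL rule presupposes $\langle\phi|\psi\rangle\neq 0$, we get $\langle\phi|R|\psi\rangle\neq 0$ and therefore $R\neq 0$. This nonvanishing is exactly where the no-chain hypothesis does its work, and is precisely what fails in the pigeonhole example, where the analogous product of ``different-box'' projectors equals $0$. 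Picking any unit vector $|r\rangle\in\operatorname{ran}(R)$, define $f(P):=\langle r|P|r\rangle$ for $P\in\mathcal{P}'$. Because all elements of $\mathcal{P}'$ commute, $f$ is a state expectation on a genuine (not merely partial) Boolean algebra, so conditions (i) and (ii) are immediate, and (iii) reduces to $\langle r|(P+Q-PQ)|r\rangle=\langle r|P|r\rangle+\langle r|Q|r\rangle-\langle r|QP|r\rangle$, which holds since $PQ=QP$. Finally, $R\le Q_i$ gives $P_i|r\rangle=|r\rangle$ when $v(P_i)=1$ and $P_i|r\rangle=0$ when $v(P_i)=0$, so $f$ reproduces the ABL values on every generator. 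Thus $f$ is a valid extension and no paradox occurs, the desired contradiction.

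I expect the main obstacle to be the nonvanishing of $R$: the argument that each value-$1$ projector is idempotent on one of the two boundary states, together with the commutativity-driven reordering, must be set up carefully (including the tie-breaking and the reduction of the ABL value-$1$ condition to $\langle\phi|(I-Q_i)|\psi\rangle=0$). A secondary point to handle cleanly is confirming that a single vector expectation genuinely verifies condition (iii) as stated, i.e.\ that $\mathcal{P}'$ collapses to an ordinary Boolean algebra once all generators commute.
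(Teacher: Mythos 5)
Your proof is correct, and while it shares the paper's overall strategy (contrapositive plus \Cref{P-commutes}), the probability function you construct is genuinely different from the paper's. The paper first proves a structure result --- $\mathcal{P}'$ equals the Boolean algebra $\mathcal{E}$ generated by the $2^n$ atoms $\tilde{P_1}\cdots\tilde{P_n}$ --- and then defines $f(P)=\text{Tr}(P_\psi P P_\phi)/\text{Tr}(P_\psi P_\phi)$, a weak-value functional, for which conditions (ii) and (iii) are linear-algebra facts but condition (i) requires an atom-by-atom argument that exactly one atom receives value $1$ and all others $0$. You instead isolate that distinguished atom directly as $R=\prod_i Q_i$, prove $R\neq 0$ via the reordering $\langle\phi|R|\psi\rangle=\langle\phi|\psi\rangle$, and take $f(P)=\langle r|P|r\rangle$ for a unit vector $|r\rangle\in\operatorname{ran}(R)$. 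This buys you two things the paper's route does not give for free: positivity is automatic for a vector-state expectation, so you never need the full identification $\mathcal{P}'=\mathcal{E}$ (only that elements of $\mathcal{P}'$ are projectors); and you explicitly verify that $f$ recovers the ABL values on the generators (via $R\le Q_i$), a requirement of \Cref{def-pps} that the paper's proof passes over in silence. The two functions in fact coincide on $\mathcal{P}'$ --- your $R$ is precisely the paper's unique value-$1$ atom --- but your packaging makes the verification shorter and more robust.

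One small repair is needed: your justification that $\langle\phi|\psi\rangle\neq 0$, namely that ``the ABL rule presupposes it,'' is not true in general; ABL probabilities can be well defined for orthogonal pre- and post-selection (e.g.\ $|\psi\rangle=|0\rangle$, $|\phi\rangle=|1\rangle$ with an intermediate $X$-basis measurement, which disturbs the state). What is true, and is all you need, is that nonvanishing follows from your own hypotheses: any single $Q_i$ has ABL value $1$, hence $\langle\phi|Q_i|\psi\rangle\neq 0$; and by the no-chain hypothesis plus \Cref{P-commutes}, with the orthogonal cases already excluded, $Q_i$ fixes $|\psi\rangle$ or $|\phi\rangle$, so $\langle\phi|Q_i|\psi\rangle=\langle\phi|\psi\rangle$. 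Note that the paper makes the same assumption implicitly --- its $f$ divides by $\text{Tr}(P_\psi P_\phi)=|\langle\phi|\psi\rangle|^2$ --- so with this one-line fix your argument stands on at least as firm ground as the original.
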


\begin{proof}
    We prove the contrapositive of this statement. We assume that we have a scenario with all the specifications above where every projector in $\mathcal{P}$ does not form a non-commutation chain with $P_\psi$ and $P_\phi$, and we prove that we never get a logical PPS paradox. 

    Consider the projectors $Q_j$ that can be written as the product of $n$ projectors in $\mathcal{P}$:
    \begin{equation}
        Q_j = \tilde{P_1}...\tilde{P_n},
    \end{equation}
     such that $\tilde{P_i} = P_i \in \mathcal{P}$ or $\tilde{P_i} = I - P_i \in \mathcal{P}$. In other words, take all the $n$ projectors corresponding to intermediate measurements, give each projector an index from $1$ to $n$. In each index, either put a projector or its complement, then take the product of all the projectors altogether. One can see that there are $2^n$ possibilities for this arrangement. The $Q_j$s are all orthogonal to each other i.e. for any $i$ and $j$ such that $k \neq j$: $Q_k Q_j = \delta_{kj} Q_j$ and $\sum_{j}{Q_j} = I$. Note that every $Q_j$ is a projector because it is made of the product of projectors from $\mathcal{P}$, all of which commute with each other.
    
    Let $\Omega$ be the sample space containing all the possible $Q_j$s, i.e $\Omega = \{Q_j\}_j$. As shown in \Cref{app:events}, we can take $\mathcal{E} = 2^\Omega$ as an event space. However, as also shown in \Cref{app:events}, there is an equivalent event space $\mathcal{E}$ that we obtain by summing distinct elements in $\Omega$ such that
    \begin{equation}
        \mathcal{E} = \left\{\sum_{j \in S}{Q_j} / S \in 2^\Omega\right\}.
    \end{equation}

    As an application of what is shown in \Cref{app:events}, $\text{Poweset}(\Omega)$ and $\mathcal{E}$ are not only equivalent, but $\mathcal{E}$ also have a Boolean algebra structure. The equivalence between $\text{Poweset}(\Omega)$ and $\mathcal{E}$  is defined as for two commuting projectors $P$ and $Q$ in $\mathcal{E}$:
    \begin{itemize}
        \item $PQ = \sum_{j \in S_P \cap S_Q}{Q_j}$,
        \item $P + Q - PQ = \sum_{j \in S_P \cup S_Q}{Q_j}$,
        \item $\overline{P} = \sum_{j \in \overline{S}_P}{Q_j}$.
    \end{itemize}

    Before we move to the first step in our proof, we will show that $\mathcal{P} \subseteq \mathcal{E}$. Let $P_k$ be any projector in $\mathcal{P}$, and let's prove that $P_k \in \mathcal{E}$. We can write $P_k$ as
    $$P_k = \sum_{j}{\tilde{P_1}... P_k...\tilde{P_n}}
        = \sum_{j \in S_{P_k}}{Q_j} \in \mathcal{E}.$$
    Therefore, $\mathcal{P} \subseteq \mathcal{E}$.

    As a first step in our proof, we will show that $\mathcal{P'} = \mathcal{E}$.

    For the first direction $\implies$, we show that $\mathcal{P'} \subseteq \mathcal{E}$.  We know that $\mathcal{P} \subseteq \mathcal{E}$, and we know that by Definition \ref{def-pps}, $\mathcal{P'}$ is the smallest partial Boolean algebra from $\mathcal{P}$. Knowing that $\mathcal{E}$ is a Boolean algebra that contains $\mathcal{P}$, then it has to contain the smallest one formed by $\mathcal{P}$, therefore, it contains $\mathcal{P'}$, and thus: $\mathcal{P'} \subseteq \mathcal{E}$.

    For the opposite direction $\impliedby$, we show that $\mathcal{E} \subseteq \mathcal{P'}$. Let $P$ be any projector in $\mathcal{E}$, $P$ can be one of two cases:
    \begin{enumerate}
        \item $P = Q_j$ for some $Q_j = \tilde{P_1}...\tilde{P_n}$ which means $P$ can be written as a conjunction of projectors in $\mathcal{P}$,
        \item $P = \sum_{j \in S}{Q_j}$ which means $P$ can be written as a disjunction of elements in $\Omega$, more specifically, as a disjunction of conjunction if elements of $\mathcal{P}$.
    \end{enumerate}

    In the first case, we know that any conjunction of elements in $\mathcal{P}$ is in $\mathcal{P'}$ since all the projectors in $\mathcal{P}$ commute, and by definition, $\mathcal{P'}$ contains all the products of projectors in $\mathcal{P}$ that commute. Now that we know that any $Q_j$ is in $\mathcal{P'}$, if $P$ is a conjunction of elements in $\Omega$ i.e $P = \sum_{j \in S}{Q_j}$, then it has to be in $\mathcal{P'}$ since by definition, $\mathcal{P'}$ is closed under conjunction, and so it contains any possible conjunction of two projectors in $\mathcal{P'}$ . Therefore, $\mathcal{E} \subseteq \mathcal{P'}$. Thus $\mathcal{E} = \mathcal{P'}$.e pre-selection and post-selection respectively. <---- typo here??

    We are at the main part of our proof of the theorem, the one involving defining a probability distribution on $\mathcal{P'}$. Consider the function $f$ defined on $\mathcal{P'}$ such that for any projector $P$ in $\mathcal{P'}$,
    \begin{equation}
        f(P) = \frac{\text{Tr}(P_\psi P P_\phi)}{\text{Tr}(P_\psi P_\phi)}.
    \end{equation}
    We will show that this function satisfies all the conditions for being a probability measure on $\mathcal{P'}$ as per \Cref{def-pps}.

    For condition $(ii)$ from the definition, we have that
    $$f(I) = \frac{\text{Tr}(P_\psi I P_\phi)}{\text{Tr}(P_\psi P_\phi)} = 1, \hspace{8em} f(0) = \frac{\text{Tr}(P_\psi 0 P_\phi)}{\text{Tr}(P_\psi P_\phi)} = 0.$$

    Condition $(iii)$ follows directly from the linearity of the trace function. The trace operation is linear, so
    \begin{align*}
        f(P + Q - PQ) &= \frac{\text{Tr}(P_\psi (P + Q - PQ) P_\phi)}{\text{Tr}(P_\psi P_\phi)} \\
        &= \frac{\text{Tr}((P_\psi P P_\phi) + (P_\psi Q P_\phi) - (P_\psi PQ P_\phi))}{\text{Tr}(P_\psi P_\phi)} \\
        &= f(P) + f(Q) - f(PQ).
    \end{align*}
    For condition $(i)$, show that for any projector $P$ in $\mathcal{P'}$, $0 <= f(P) <= 1$. Since $\mathcal{P'} = \mathcal{E}$, any projector $R$ in $\mathcal{P'}$ can be written as a product of projectors in $\mathcal{P}$ such that: $R = Q_j = \tilde{P_1}...\tilde{P_n}$ or as the sum of several $Q_i$ as shown above. Now,
    \begin{align*}
        f(Q_j) &= \frac{\text{Tr}(P_\psi \tilde{P_1}...\tilde{P_n} P_\phi)}{\text{Tr}(P_\psi P_\phi)} .
    \end{align*}
    Since all the elements in $\mathcal{P}$ commute, we can arrange the elements in any product $\tilde{P_1}... \tilde{P_k} \tilde{P_{k+1}}...\tilde{P_n}$ such that the first $k$ projectors are the ones that commute with $P_\psi$, and that the rest of the projectors all commute with $P_\phi$. Now, using \Cref{P-commutes}, we know that for the first $k$ projectors in $Q_i$,
    \begin{align*}
        P_\psi \tilde{P_i} = |\psi \rangle \langle \psi| P = P_\psi \tilde{P_i} = \begin{cases}
            P_\psi & \text{if } \tilde{P_i} \text{ and } P_\psi \text{ idempotently commute}, \\
            0 & \text{if } \tilde{P_i} \text{ and } P_\psi \text{ orthogonally commute}.
        \end{cases}
    \end{align*}
    One can see that applying this repeatedly for all the projectors $P_1$ to $P_k$ can only result in $P_\psi$ if all the projects tend not to be orthogonal with $P_\psi$. In the opposite case, the series might collapse to 0. 
    
    Similarly, for the rest of the projectors, we have $\tilde{P_i} P_\phi  = P_\phi$ or $\tilde{P_i} P_\phi = 0$. Now, coming back to $f(Q_j)$, it can only be 1 if all the projectors happen to commute non-orthogonally with either $P_\psi$ or $P_\phi$.

     Now, we analyze the second case in which elements of $\mathcal{E}$ can be written as sums of distinct $Q_j$. The summands in this case are distinct. In other words, each two summands are different by at least one $\tilde{P_i}$ (in fact, each two summands are orthogonal). Moreover, recalling Corollary \ref{idem-ortho} we know that in our non-commutation chain scenario, if a certain projector $Q$ commutes with $P_\psi$ commute non-orthogonally, then its complement $I - Q$ commutes orthogonally with $P_\psi$, and vice versa. The same can be said about $Q$ and $P_\phi$ if $Q$ commutes with $P_\phi$. This means that considering all possible $Q_j$ configurations (by configuration here we mean, different choices for $\tilde{P_i}$), there is only one single configuration of $\tilde{P_i}$ where all the projectors non-orthogonally commute with either $P_\psi$ or $P_\phi$, only and only in this single case would $f$ be $1$. In all the others it will be $0$. Knowing that all the summands are distinct, only one configuration equalling 1 means that any sum will always be 1 or 0. Therefore, we have just shown an even stronger claim than the one required by condition $(i)$, that for any projector $P$ in $\mathcal{P'}$, $f(P) = 0$ or $f(P) = 1$. 
\end{proof}

\section{Connection to the Theory of Causal Balance} \label{sec:causalbalance}
The theorem proved in the preceding section draws from the theory of causal balance. The proof approach can be extended to establish the conjecture proposed in \cite{nick-paper}, namely that the theory of causal balance blocks both pre-logical and post-logical paradoxes. We demonstrate this connection in the following analysis. The requisite background on causal balance theory is provided in \Cref{app:causalbalance}.

We recall Theorem 4 in \cite{nick-paper}.
\begin{theorem}
\label{allowed-interferences}
In a circuit that models a causal structure, the only interference influences allowed are of the form
$$\left\{ P_{A_m^{in}}^{e_m'} \right\} \rightarrow \left\{ P_{A_k^{out}}^{e_k'}\right\},$$
such that $m \leq k$.
\end{theorem}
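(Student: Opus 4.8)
The plan is to read off the statement as a structural property of the directed graph underlying the circuit, working from the definitions of causal balance collected in \Cref{app:causalbalance}. The relevant data are the boxes with their input and output systems $A_k^{in}, A_k^{out}$, indexed so that the index records position in the circuit's causal order, together with the notion of an interference influence $\{P_{A_m^{in}}^{e_m'}\} \rightarrow \{P_{A_k^{out}}^{e_k'}\}$, which I read as being witnessed by a nonvanishing coherence term in the process once the input $A_m^{in}$ is refined along the effects $e_m'$ and the output $A_k^{out}$ is resolved along the effects $e_k'$. First I would make precise what it means for the circuit to ``model a causal structure'': the boxes compose into a directed acyclic arrangement in which every wire runs from an earlier output to a later input, so the global process decomposes as a time-ordered composition of CPTP maps and the index set inherits a partial order for which $m \leq k$ is exactly the assertion that $A_m^{in}$ sits in the causal past of (or coincides with) $A_k^{out}$.

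I would then prove the contrapositive: if $m > k$, no directed path joins $A_m^{in}$ to $A_k^{out}$, and the associated interference term vanishes. Concretely, I would slice the circuit at the time of $A_k^{out}$ and write the process as a composition $\Lambda_{\mathrm{late}} \circ \Lambda_{\mathrm{early}}$, noting that $A_k^{out}$ is already fixed by $\Lambda_{\mathrm{early}}$ whereas $A_m^{in}$ only feeds $\Lambda_{\mathrm{late}}$. Tracing out the wires that do not connect the two systems, the candidate coherence factors through a product of reduced maps, so the off-diagonal contribution that would witness the influence is forced to zero. This is the familiar no-signaling-from-the-future argument, but lifted from marginal probabilities to coherences, which is the level at which an interference influence lives.

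The main obstacle is precisely this lift. Ordinary no-signaling only forbids a later input from altering the marginal statistics at an earlier output, whereas an interference influence can in principle persist even when every marginal is untouched, so the crux is a lemma asserting that, in a genuine causal circuit, the absence of a directed path forces the entire coherence structure relating the $e_m'$- and $e_k'$-branches to factorize, not merely the marginals. I expect to establish this by induction on the number of time slices separating the two systems, peeling off one CPTP layer at a time and using that a layer acting only on the ``late'' wires cannot generate coherence between branches distinguished solely by the ``early'' output; the causal-balance conditions of \Cref{app:causalbalance} should supply exactly the constraint needed to close each inductive step.
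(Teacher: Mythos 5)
First, a point of order: the paper does not prove this statement at all --- it is recalled verbatim as Theorem 4 of \cite{nick-paper} --- so there is no in-paper proof to compare your attempt against; what follows reviews your argument against the definitions the paper does supply in \Cref{app:causalbalance}.

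Your proposal has two genuine gaps. The first is that your contrapositive only negates the condition $m \leq k$, but the theorem forbids much more than future-to-past influences: it also rules out in-to-in, out-to-out, and out-to-in influences, and in those cases a directed path from the earlier system to the later one typically \emph{does} exist. For instance, an influence from $\{P_{A_1^{out}}^{e_1'}\}$ to $\{P_{A_2^{in}}^{e_2}\}$ is excluded by the theorem even though wire $A_1$ feeds directly into the causal past of wire $A_2$; your ``no directed path implies factorization'' argument is silent on every case of this kind, so even if it succeeded it would establish only a fragment of the statement. The second gap is that your argument never invokes the one hypothesis that makes the theorem true: the decompositions $\{P_{A_k^{in}}^{e_k}\}$ and $\{P_{A_k^{out}}^{e_k'}\}$ are not arbitrary projector decompositions on the wires, but the \emph{causally balanced} ones, selected to lie in the past- and future-balanced event spaces $\mathcal{E}^{\downarrow}_{\tilde{\mathcal{X}}\mathfrak{B}}$ and $\mathcal{E}^{\uparrow}_{\tilde{\mathcal{X}}\mathfrak{B}}$, i.e.\ in the center of the wire's algebra intersected with the commutant of everything in its causal past (respectively future). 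For arbitrary decompositions the statement is simply false, so no amount of generic circuit causality --- DAG structure, slicing into CPTP layers, no-signaling --- can deliver it.

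Relatedly, you are working with the wrong notion of interference influence. In this framework (\Cref{interference-influence}) an interference influence from $\{P_A^i\}$ to $\{P_D^j\}$ is \emph{defined} by non-commutation of the Heisenberg-picture projectors, not by a nonvanishing coherence term, so your central lemma --- lifting no-signaling from marginals to coherences by induction on time slices --- is aimed at the wrong target, and the constraint you hope the causal-balance conditions will ``supply at each inductive step'' is in fact the entire content of the theorem. The proof this framework calls for is a short case analysis with no induction: by past balance, $\{P_{A_k^{in}}^{e_k}\}$ commutes with every algebra in the causal past of $A_k$, hence with both decompositions on every $A_m$ with $m < k$; by future balance, $\{P_{A_m^{out}}^{e_m'}\}$ commutes with every algebra in the causal future of $A_m$, hence with both decompositions on every $A_k$ with $k > m$; and algebras on causally incomparable wires commute outright. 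Running through the four pair types, the only pairs not forced to commute by these conditions are $\{P_{A_m^{in}}^{e_m}\}$ with $\{P_{A_k^{out}}^{e_k'}\}$ for $m \leq k$, which is exactly the claim.
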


Consider the following unitary circuit:
\begin{center}
    \begin{tikzpicture}
	\begin{pgfonlayer}{nodelayer}
		\node [style=none] (0) at (-15.5, 5) {};
		\node [style=none] (1) at (-13.5, 5) {};
		\node [style=none] (2) at (-13.5, 4) {};
		\node [style=none] (3) at (-15.5, 4) {};
		\node [style=none] (4) at (-12.5, 5) {};
		\node [style=none] (5) at (-10.5, 5) {};
		\node [style=none] (6) at (-10.5, 4) {};
		\node [style=none] (7) at (-12.5, 4) {};
		\node [style=none] (8) at (-14, 6.5) {};
		\node [style=none] (9) at (-14, 7.5) {};
		\node [style=none] (10) at (-12, 7.5) {};
		\node [style=none] (11) at (-12, 6.5) {};
		\node [style=none] (12) at (-13.75, 5) {};
		\node [style=none] (13) at (-13.75, 6.5) {};
		\node [style=none] (14) at (-12.25, 6.5) {};
		\node [style=none] (15) at (-12.25, 5) {};
		\node [style=none] (16) at (-10.75, 5) {};
		\node [style=none] (17) at (-12.25, 7.5) {};
		\node [style=none] (18) at (-13.75, 7.5) {};
		\node [style=none] (19) at (-15.5, 5) {};
		\node [style=none] (20) at (-15.25, 4) {};
		\node [style=none] (21) at (-13.75, 4) {};
		\node [style=none] (22) at (-15.25, 5) {};
		\node [style=none] (23) at (-15.25, 9.5) {};
		\node [style=none] (24) at (-13.75, 9.5) {};
		\node [style=none] (25) at (-12.25, 9.5) {};
		\node [style=none] (26) at (-10.75, 9.5) {};
		\node [style=none] (27) at (-15.25, 1) {};
		\node [style=none] (28) at (-13.75, 1) {};
		\node [style=none] (29) at (-12.25, 1) {};
		\node [style=none] (30) at (-10.75, 1) {};
		\node [style=none] (31) at (-10.75, 4) {};
		\node [style=none] (32) at (-12.25, 4) {};
		\node [style=none] (33) at (-13, 7) {$U_3$};
		\node [style=none] (34) at (-14.5, 4.5) {$U_2$};
		\node [style=none] (35) at (-11.5, 4.5) {$U_1$};
		\node [style=none] (36) at (-16, 1.75) {$\{P_{A_1^{in}}^{e_1}\}$};
		\node [style=none] (37) at (-16, 2.5) {$\{P_{A_1^{out}}^{e_1'}\}$};
		\node [style=none] (38) at (-11, 5.25) {};
		\node [style=none] (40) at (-11.5, 5.5) {$\{P_{A_2^{in}}^{e_2}\}$};
		\node [style=none] (41) at (-11.5, 6.25) {$\{P_{A_2^{out}}^{e_2'}\}$};
		\node [style=none] (43) at (-13, 8) {$\{P_{A_3^{in}}^{e_3}\}$};
		\node [style=none] (44) at (-13, 8.75) {$\{P_{A_3^{out}}^{e_3'}\}$};
		\node [style=none] (45) at (-15.5, 3) {$A$};
		\node [style=none] (46) at (-13.5, 5.75) {$B$};
		\node [style=none] (47) at (-13.25, 9.25) {$C$};
	\end{pgfonlayer}
	\begin{pgfonlayer}{edgelayer}
		\draw (9.center) to (10.center);
		\draw (10.center) to (11.center);
		\draw (11.center) to (8.center);
		\draw (8.center) to (9.center);
		\draw (0.center) to (1.center);
		\draw (1.center) to (2.center);
		\draw (2.center) to (3.center);
		\draw (3.center) to (0.center);
		\draw (4.center) to (5.center);
		\draw (5.center) to (6.center);
		\draw (7.center) to (4.center);
		\draw (7.center) to (6.center);
		\draw (23.center) to (22.center);
		\draw (12.center) to (13.center);
		\draw (4.center) to (15.center);
		\draw (15.center) to (14.center);
		\draw (16.center) to (26.center);
		\draw (20.center) to (27.center);
		\draw (21.center) to (28.center);
		\draw (32.center) to (29.center);
		\draw (30.center) to (31.center);
		\draw (17.center) to (25.center);
		\draw (18.center) to (24.center);
	\end{pgfonlayer}
\end{tikzpicture}

\end{center}

\Cref{allowed-interferences} dictates that there can never be an influence from $\{P_{A_2^{out}}^{e_2'}\}$ to $\{P_{A_1^{in}}^{e_1}\}$ for two reasons: 
\begin{enumerate}
    \item It is an influence from an ``out'' projector to an ``in'' projector.
    \item It is an influence from a projector that is higher up in the circuit to a projector that is lower up in the circuit. Interference influences are equivalent to commutations as shown in theorem \ref{non-commutation chain}.
\end{enumerate}

One of the patterns that are immediately eliminated by \Cref{allowed-interferences} is non-commutation chains. 

Given the assumptions in \Cref{non-commutation chain}, every PPS paradox admits a non-commutation chain, the absence of which is guaranteed by the theory of causal balance. Therefore, one could say that the theory of causal balance ``blocks'' PPS paradoxes. In other words, PPS paradoxes, never occur under the framework of the theory of causal balance.

Intriguingly, \cite{nick-paper} demonstrated that any phenomenon from standard quantum theory can be reproduced within the framework of the theory of causal balance. This raises a compelling question: given our analysis showing that the theory prevents these paradoxes (at least in the specific case examined in this paper), how might one model such paradoxes within this framework? We leave this investigation for future work.

\section{Conclusion}
This paper establishes that logical PPS paradoxes in the restricted case where all projectors in $\mathcal{P}$
 commute and where $\mathcal{P'}$ is finite necessarily admit non-commutation chains. While this result provides important structural insight into the nature of these paradoxes, several theoretical limitations warrant further investigation.

The finite cardinality constraint on $\mathcal{P'}$
 represents a significant restriction, as $\mathcal{P'}$ can in principle be infinite. Whether our theorem extends to the infinite case remains an open question, and failure to do so would reveal a fundamental relationship between the cardinality of $\mathcal{P'}$ and the structural properties of these paradoxes. Additionally, our analysis assumes commutativity of all projectors in $\mathcal{P}$, though it remains unclear whether scenarios with non-commuting projectors are physically realizable in logical PPS frameworks.

Our findings also bear on the relationship between logical PPS paradoxes and the theory of causal balance. Given that \cite{nick-paper} demonstrated the theory's ability to reproduce all standard quantum phenomena, yet our analysis shows these particular paradoxes admit non-commutation chains that should be blocked by causal constraints, an interesting question emerges regarding how such paradoxes might be modeled within this framework. We conjecture that circuit representations of PPS paradoxes (in the context of the theory of causal balance), which necessarily incorporate unitary interactions, will exhibit richer causal structures than the simple sequential projector arrangements. This suggests that modeling PPS paradoxes within the theory of causal balance may reveal fundamentally different structural properties than those apparent in the standard formulation, representing an important direction for future work.

\section*{Acknowledgments}

I would like to thank my supervisors Dr. Nick Ormrod and Professor Jonathan Barret. Endless thanks to my friends and family for their support. Thanks to the Optiver Foundation Scholarship for financially supporting my MSc at the University of Oxford.

\bibliographystyle{plain}
\bibliography{references}

\appendix
\section{Proof of Lemma \ref{P-commutes}} \label{app:proof1}
\begin{proof}
    For the $\impliedby$ direction, we show below what each of the cases implies about commuting with either $P_\psi$ or $P_\phi$.
    \begin{itemize}
        \item If $P |\psi \rangle = | \psi \rangle$, then,
        $$ P P_\psi =P | \psi \rangle \langle \psi | \implies P P_\psi = |\psi \rangle \langle \psi |.$$
        we know that $\langle \psi| = \langle \psi| P$. We deduce that  $P P_\psi = |\psi \rangle \langle \psi | P = P_\psi P$. Therefore $P$ and $P_\psi$ commute.

        \item If $P |\psi \rangle = 0$, then
        $$P P_\psi = P |\psi \rangle \langle \psi| \implies P P_\psi = 0.$$
        Now, for the other side, we have $P_\psi P = |\psi \rangle \langle \psi|P$. We know from proposition 2 that because $P |\psi \rangle = 0$, $\langle \psi| P = 0$. Therefore, $P_\psi P = 0 = P P_\psi$, and $P$ and $P_\psi$ commute.
    \end{itemize}
        We can prove the other two cases involving $P_\phi$ and $P$ with a similar argument. We can see that in all the cases, we can deduce the commutation of $P$ with either $P_\psi$ or with $P_\phi$.

    For the $\implies$ direction, we assume that $P$ and $P_\psi$ commute or $P$ and $P_\phi$ commute, and we show that either $P$ and $P_\psi$ orthogonally commute or idempotently commute, or that $P$ and $P_\phi$ orthogonally commute or idempotently commute. 
    
    \textbf{Case 1: $P$ and $P_\psi$ commute}\\
    Two projectors commuting means that they have a shared basis. Let $|e_1 \rangle, |e_2 \rangle, ..., |e_n \rangle$ be the shared basis of both projectors. 

    Now, since $P_\psi = |\psi \rangle \langle \psi |$ is a rank-1 projector, there is a $k \in \{1, ..., n\}$ such that: $P_\psi = |e_k \rangle \langle e_k|$. Considering that $P$ is also a rank-1 projector, we have: $P = |e_j \rangle \langle e_j|$ for some $j \in \{1, 2, ..., n\}$. Therefore, $P |\psi \rangle = |e_j \rangle \langle e_j| |e_k \rangle$. Now, since $|e_j\rangle$ and $|e_k \rangle$ are basis states, there are two possibilities:
    \begin{itemize}
        \item \textbf{$k = j$}, in which case $\langle e_j| |e_k \rangle = 1 \implies P |\psi\rangle = |e_j \rangle \langle e_j| |e_k \rangle = |e_j \rangle = |e_k\rangle = |\psi \rangle$,
        \item \textbf{$k \neq j$}, in which case $\langle e_j| |e_k \rangle = 0 \implies P |\psi\rangle = |e_j \rangle \langle e_j| |e_k \rangle = 0$.
    \end{itemize}

    Therefore, $P |\psi \rangle = |\psi \rangle$ or $P |\psi \rangle = 0$.
    
    \textbf{Case 2: $P$ and $P_\phi$ commute}\\
    We use a similar argument to the one used in the previous case, and we deduce that we obtain one of the following cases: $P |\phi \rangle = |\phi\rangle$ or $P |\phi \rangle = 0$.
    
    More intuitively, if $P$ and $P_\psi$ commute,  since we are dealing with rank-1 projectors, one can try to imagine the subspaces onto which both $P$ and $P_\psi$ project. Their corresponding subspaces will either have no overlap (the orthogonal case), or one projector's subspace will contain the other projector's subspace (the idempotent case).
\end{proof}

\section{Proof of Corollary \ref{idem-ortho}} \label{app:proof2}
\begin{proof}
    ($\Longrightarrow$) We assume that $P$ and $P_\psi$ idempotently commute. $P |\psi \rangle = | \psi \rangle \implies (I - P) |\psi \rangle = |\psi \rangle - P |\psi \rangle \implies (I - P) |\psi \rangle = |\psi \rangle - |\psi \rangle = 0$. Therefore $I - P$ and $P_\psi$ orthogonally commute.

    ($\Longleftarrow$) We assume that $I - P$ and $P_\psi$ orthogonally commute. $(I - P) |\psi \rangle = 0 \implies (I - P)|\psi \rangle = |\psi \rangle - P |\psi \rangle = 0 \implies P|\psi \rangle = |\psi \rangle$. Therefore $P$ and $P_\psi$ idempotently commute.
\end{proof}

\section{Event Spaces} \label{app:events}
Usually, when we discuss probabilities, we talk about the sample space, which is the set of the possible outcomes of an event. However, if we want to further represent events that are related to the sample space, one usually has to appeal to an event space, which is usually denoted by the powerset of the sample space. In some of the later sections, we will encounter situations where we want to calculate the probability of the occurrence of a certain event or a collection of events. Let $\Omega$ be a sample space, and let $P(\omega)$ be the associated event space. As shown in \cite{nick-phd}, a probability measure $p: P(\Omega) \rightarrow [0, 1]$ is a function from the event space $P(\Omega)$ to $[0, 1]$ such that
\begin{itemize}
    \item $p(\Omega) = 1$,
    \item For any two events $e_1, e_2 \in P(\Omega)$: $p(e_1 \cup e_2)$ = $p(e_1) + p(e_2) - p(e_1 \cap e_2)$.
\end{itemize}

In standard quantum theory, we use projectors to represent measurement outcomes. More precisely, we represent measurement by an orthogonal and complete set of $n$ projectors $\Omega = \{P^i\}_{i=1}^{n}$. By orthogonal, we mean that all the projectors are pairwise orthogonal $P^iP^j=0$ for $i \neq j$, and by complete we mean that all the projectors in the set $\Omega$ sum to the identity $\sum_{i = 1}^{n} P_i = 1$ \cite{nick-phd}. 

Now, since $\Omega$ is the sample space for the measurement, it is a good guess that the event space for the measurement would be the powerset of the sample space $\Omega$. However, there is a more convenient way to represent the event space of a measurement that has a one-to-one correspondence with the powerset of the event space. Instead of taking the power set of projectors, consider the projector $Q$ obtained by summing the projectors in a set $S$ in $P(\Omega)$ such that: $Q = \sum_{P \in S}P$. Now, let $\text{E}_{\Omega}$ be the set of projectors obtained using the latter for all the sets $S$ in $P(\Omega)$ . Now, observe that for any sets $S_1$ and $S_2$ in $\text{E}_{\Omega}$, we have: $Q_{S_1} \neq Q_{S_2}$, which means that there is a one-to-one correspondence between the elements of $P(\Omega)$  and $\text{E}_{\Omega}$ \cite{nick-phd}. 

Moreover, we can easily check that there is an equivalence between
\begin{itemize}
    \item The complement of a set $S$ in $P(\Omega)$ and $I - Q_S$ in $\text{E}_{\Omega}$,
    \item Disjunctions $S_1 \cup S_2$ in $P(\Omega)$  and $Q_{S_1} + Q_{S_2} - Q_{S1}Q_{S2}$ in $\text{E}_{\Omega}$,
    \item Conjunctions $S_1 \cap S_2$ in $P(\Omega)$ and $Q_{S_1}Q_{S_2}$ in $\text{E}_{\Omega}$. 
\end{itemize}

This means that on top of the one-to-one correspondence, $\text{E}_{\Omega}$ maintains the logical structure of events.

\section{Operator Algebras and Relevant Properties}
We present the following mathematical background from \cite{nick-phd}.
\begin{definition}
\label{operator-algebra}
An algebra of operators is a set of linear vectors on the vectors of a Hilbert space with a structure such that for two operators $M$ and $N$ in an algebra $\mathcal{X}$,
\begin{itemize}
    \item $M \in \mathcal{X} \implies cM \in \mathcal{X}$,
    \item $M \in \mathcal{X} \implies M^{\dagger} \in \mathcal{X}$,
    \item $M,N \in \mathcal{X} \implies MN \in \mathcal{X}$,
    \item $M,N \in \mathcal{X} \implies M + N \in \mathcal{X}$,
    \item $I \in \mathcal{X}$ and $M \in \mathcal{X} \implies IM = M = MI$.
\end{itemize}
\end{definition}

An example of an algebra would be $Op(\mathcal{H})$, the set of all operators in a Hilbert space $\mathcal{H}$. However, $Op(\mathcal{H})$ is not the only algebra one can find in a Hilbert space, other subsets of $Op(\mathcal{H})$ can construct an algebra. In this paper, when we use the term ``algebras"to mean operator algebras per definition \ref{operator-algebra}.

In this dissertation, we need to distinguish between two different but equivalent types of algebras: Schr\"{o}dinger algebras and Heisenberg algebras. Let $U: A \otimes B \rightarrow C \otimes D$ be a unitary transformation from systems $A$ and $B$ to systems $C$ and $D$. Consider algebras $\mathcal{A} \in Op(\mathcal{H_A})$ and $\mathcal{D} \in Op(\mathcal{H_D})$. As shown in \cite{nick-phd}, there is an equivalence between the Schr\"{o}dinger and Heisenberg algebras such that
\begin{align*}
    M_D \in \mathcal{A} &\iff \tilde{M_A} := M_A \otimes I_B \in \tilde{\mathcal{A}}, \\
    M_D \in \mathcal{D} &\iff \tilde{M_D} := U^{-1} (I_C \otimes M_D) U \in \tilde{\mathcal{D}}.
\end{align*}

We recall this very important result about algebras on Hilbert spaces from quantum information from \cite{quantum-info}. 
\begin{proposition}
    Let $\mathcal{H}$ be a Hilbert space, and let $\mathcal{X} \in Op(\mathcal{H})$ be an algebra. there is some decomposition of $\mathcal{H}$ into a direct sum of tensor products
    \begin{center}
        $\mathcal{H} = \oplus_{i = 1}^{n} \mathcal{H}_{L}^{i} \otimes \mathcal{H}_{R}^{i}$
    \end{center}
    such that 
    \begin{center}
        $X \in \mathcal{X} \iff M = \oplus_{i = 1}^{n} M_{L}^{i} \otimes I_{R}^{i}$.
    \end{center}
\end{proposition}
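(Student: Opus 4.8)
The plan is to prove the finite-dimensional Wedderburn-type structure theorem by splitting $\mathcal{X}$ into ``factor'' blocks using its center and then exhibiting an explicit tensor-product structure on each block. Throughout I take $\mathcal{H}$ to be finite-dimensional, which is implicit in the statement since the decomposition is a finite direct sum; the conditions of \Cref{operator-algebra} make $\mathcal{X}$ a unital subalgebra of $Op(\mathcal{H})$ that is closed under $\dagger$, i.e.\ a finite-dimensional $*$-algebra. (Equivalently one could invoke the double commutant theorem, since in finite dimensions such an $\mathcal{X}$ is automatically its own bicommutant; but the center-based argument below has the advantage of constructing the tensor factors directly.)

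First I would pass to the center $Z(\mathcal{X}) = \mathcal{X} \cap \mathcal{X}'$, where $\mathcal{X}'$ denotes the commutant of $\mathcal{X}$ in $Op(\mathcal{H})$. Since $Z(\mathcal{X})$ is a commutative $\dagger$-closed algebra, it is spanned by its minimal projections $p_1, \dots, p_n$, which are mutually orthogonal and satisfy $\sum_{i=1}^n p_i = I$. Each $p_i$ commutes with all of $\mathcal{X}$, so $p_i \mathcal{H}$ is $\mathcal{X}$-invariant; this yields the orthogonal decomposition $\mathcal{H} = \oplus_{i=1}^n p_i\mathcal{H}$, and every $X \in \mathcal{X}$ is block-diagonal with blocks $X p_i$. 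The problem thus reduces to the compressed algebra $\mathcal{X}_i := \mathcal{X} p_i$ acting on $\mathcal{H}_i := p_i \mathcal{H}$, and it suffices to find, for each $i$, a unitary $\mathcal{H}_i \cong \mathcal{H}_L^i \otimes \mathcal{H}_R^i$ carrying $\mathcal{X}_i$ onto $Op(\mathcal{H}_L^i) \otimes I_{\mathcal{H}_R^i}$.

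The heart of the argument is that each block is a \emph{factor}: since $p_i$ is a minimal central projection, the center of $\mathcal{X}_i$ is just $\mathbb{C} p_i$. I would pick a minimal projection $q \in \mathcal{X}_i$, so that $q \mathcal{X}_i q = \mathbb{C} q$, and use factoriality to show all minimal projections are equivalent: there exist partial isometries $v_1, \dots, v_{d_i} \in \mathcal{X}_i$ with $v_a^\dagger v_a = q$ and mutually orthogonal final projections $v_a v_a^\dagger$ summing to $p_i$. A short computation gives $v_a^\dagger v_b = \delta_{ab}\, q$, so the $\{v_a v_b^\dagger\}$ form a system of $d_i \times d_i$ matrix units; moreover, writing any $X \in \mathcal{X}_i$ as $X = \sum_{a,b} v_a (v_a^\dagger X v_b) v_b^\dagger$ and using $v_a^\dagger X v_b \in q\mathcal{X}_i q = \mathbb{C} q$ shows that these matrix units \emph{span} $\mathcal{X}_i$. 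Setting $\mathcal{H}_L^i := \mathbb{C}^{d_i}$ (indexed by the $v_a$) and $\mathcal{H}_R^i := q\mathcal{H}_i$, the map $e_a \otimes (q\xi) \mapsto v_a q\xi$ is a unitary $\mathcal{H}_L^i \otimes \mathcal{H}_R^i \cong \mathcal{H}_i$ under which $v_a v_b^\dagger$ becomes $E_{ab} \otimes I$, so $\mathcal{X}_i$ is carried exactly onto $Op(\mathcal{H}_L^i) \otimes I_{\mathcal{H}_R^i}$. Reassembling the blocks gives both directions of the stated equivalence.

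The main obstacle is precisely the factor step --- proving that a finite-dimensional algebra with trivial center has all of its minimal projections mutually equivalent, which is what produces the full matrix algebra. This rests on two uses of the hypotheses: $\dagger$-closure, which guarantees that the orthogonal complement of an invariant subspace is again invariant (hence complete reducibility and the existence of the partial isometries), and finite dimensionality, which guarantees that minimal projections exist and that the inductive construction of the $v_a$ terminates. The remaining verifications --- that $v_a^\dagger v_b = \delta_{ab} q$, that the proposed map is unitary, and that it intertwines $\mathcal{X}_i$ with left multiplication on the matrix factor --- are routine algebra from the matrix-unit relations, so I would state them without grinding through the computation.
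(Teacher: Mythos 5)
The paper contains no proof of this proposition for you to be compared against: it is explicitly ``recalled'' from the quantum-information literature (\cite{quantum-info}) and used as a black box in the appendix, so your proposal supplies an argument the paper never gives. Judged on its own merits, what you have written is the standard structure theorem for finite-dimensional $*$-algebras, and the route is sound. The reduction via the minimal projections $p_1,\dots,p_n$ of $Z(\mathcal{X})$ (which exist and sum to $I$ precisely because of finite-dimensionality and the unitality and $\dagger$-closure demanded by Definition \ref{operator-algebra}), the observation that each compression $\mathcal{X}_i = \mathcal{X}p_i$ is a factor, and the matrix-unit computations --- $v_a^\dagger v_b = \delta_{ab}q$, the expansion $X = \sum_{a,b} v_a (v_a^\dagger X v_b) v_b^\dagger$ using $\sum_a v_a v_a^\dagger = p_i$ and $q\mathcal{X}_i q = \mathbb{C}q$, and the unitarity of $e_a \otimes q\xi \mapsto v_a q\xi$ --- all check out and yield exactly the stated biconditional, including the surjectivity onto $Op(\mathcal{H}_L^i)\otimes I_R^i$ needed for the $\Longleftarrow$ direction.

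Two caveats. First, the step you yourself identify as the main obstacle --- existence of partial isometries $v_a \in \mathcal{X}_i$ with $v_a^\dagger v_a = q$ and $\sum_a v_a v_a^\dagger = p_i$ --- is left at the level of a gesture (``complete reducibility''), and that gesture by itself does not produce the $v_a$. The argument that fills it in: take a maximal orthogonal family of minimal projections $q_1,\dots,q_{d_i}$ in $\mathcal{X}_i$ summing to $p_i$ (finite-dimensionality); for each $a$ note that $q_a \mathcal{X}_i q \neq \{0\}$, since otherwise the central support of $q$ in $\mathcal{X}_i$ would be a central projection different from $0$ and $p_i$, contradicting factoriality; then take the polar decomposition of a nonzero $x = q_a x q$, whose partial-isometry part lies in $\mathcal{X}_i$ because $\mathcal{X}_i$ is $\dagger$-closed and, being finite-dimensional, closed under the functional calculus; minimality of $q$ and $q_a$ upgrades $v_a^\dagger v_a \leq q$ and $v_a v_a^\dagger \leq q_a$ to equalities. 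With that paragraph included, the proof is complete. Second, you are right to read finite-dimensionality of $\mathcal{H}$ into the statement: the proposition as printed omits it, but the paper's own surrounding definitions assume it, and without it the claim is false for general operator $*$-algebras, so this is a genuine hypothesis rather than a convenience.
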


\begin{definition}
\label{commutant-of-algebra}
(Commutant and Commuting Center)\\
Let $\mathcal{H}$ be a finite-dimensional Hilbert space. Let $\mathcal{X}$ be an algebra such that $\mathcal{X} \in Op(\mathcal{H})$. We call $\mathcal{X'}$, the set of all operators in $\mathcal{X}$ that commute with every operator in $\mathcal{X}$ the commutant of algebra $\mathcal{X}$.
The commuting center of an algebra $Z(\mathcal{X}) = \mathcal{X} \cap \mathcal{X'}$ is the set of the operators in $\mathcal{X}$ that commute with all the operators within $\mathcal{X}$. $Z(\mathcal{X})$ is a commutative algebra. Moreover, $Z(\mathcal{X})$ has the form
\begin{center}
    $M \in \mathcal{X} \iff M = \sum_{i = 1}^{n} c_i \pi_i$ for some $\{c_i\}_{i = 1}^{n}$,
\end{center}
such that $\pi_i$ projects onto a subspace $\mathcal{H}_{L}^{i} \otimes \mathcal{H}_{R}^{i}$. As discussed in \cite{nick-phd}, as a consequence of the above,
\begin{center}
    $M \in Z(\mathcal{X})$ is a projector $\iff M \in \mathcal{E}$,
\end{center}
such that $\mathcal{E}_{\Omega}$ is the event space from the sample space $\Omega = \{\pi_{i}\}_{i = 1}^{n}$ (see \Cref{app:events} for more on sample spaces and events spaces). In other words: $\mathcal{E}_{\Omega} = \text{Proj} \cap Z(\mathcal{X})$.
\end{definition}
\section{The Theory of Causal Balance} \label{app:causalbalance}

Excluding the following definition of unitary circuits, the reader can choose to read this chapter in two different ways: either by starting with the section on unitary circuits, and then coming back to the beginning of the chapter for more detail, or by reading it in the intended order by building up from the fundamental idea of the theory. This chapter has been written with both approaches in mind. 

\begin{definition}
    \label{unitary-circuit}
    A unitary circuit is a circuit made of wires representing systems, and boxes representing unitary transformations. The following is an example of a unitary circuit from $A \otimes B$ to $C \otimes D$:\\
    \begin{center}
        \begin{tikzpicture}
	\begin{pgfonlayer}{nodelayer}
		\node [style=none] (0) at (-14, 4) {};
		\node [style=none] (1) at (-12, 4) {};
		\node [style=none] (2) at (-12, 3) {};
		\node [style=none] (3) at (-14, 3) {};
		\node [style=none] (4) at (-13.75, 3) {};
		\node [style=none] (5) at (-12.25, 3) {};
		\node [style=none] (6) at (-13.75, 4) {};
		\node [style=none] (7) at (-12.25, 4) {};
		\node [style=none] (8) at (-13.75, 5) {};
		\node [style=none] (9) at (-12.25, 5) {};
		\node [style=none] (10) at (-13.75, 2) {};
		\node [style=none] (11) at (-12.25, 2) {};
		\node [style=none] (12) at (-13, 3.5) {$U$};
		\node [style=none] (13) at (-14, 2.25) {$A$};
		\node [style=none] (14) at (-12, 2.25) {$B$};
		\node [style=none] (15) at (-14, 4.5) {$C$};
		\node [style=none] (16) at (-12, 4.5) {$D$};
	\end{pgfonlayer}
	\begin{pgfonlayer}{edgelayer}
		\draw (0.center) to (6.center);
		\draw (6.center) to (7.center);
		\draw (7.center) to (1.center);
		\draw (1.center) to (2.center);
		\draw (2.center) to (5.center);
		\draw (5.center) to (4.center);
		\draw (4.center) to (3.center);
		\draw (3.center) to (0.center);
		\draw (6.center) to (8.center);
		\draw (4.center) to (10.center);
		\draw (5.center) to (11.center);
		\draw (9.center) to (7.center);
	\end{pgfonlayer}
\end{tikzpicture}

    \end{center}
\end{definition}

Ormord and Barrett introduced the theory of causal balance in early 2024. It is considered a conceptual shift from previous theories since causation is no longer merely seen as this connection between events, but a fundamental framework out of which events emerge. 

To present the theory of causal balance in a single chapter, we start by defining interference influences, then discuss operator algebras and how they influence each other. Next, we define causal structure as a directed graph of operator algebras. We will no longer be talking about events emerging out of causation, but rather the emergence of events on a given algebra relative to the set of algebras that include said algebra. One is thus inclined to inquire into the mathematical nature of these events, and the response mirrors that found in the consistent histories formalism: projector decompositions. We will see that the causal structure of this theory restricts the type of influences between projector decompositions, which can also be expressed in how these projector decompositions commute. 

In this theory, events don't emerge from causation in the usual way. Instead, events emerge on a given algebra relative to other algebras that contain it. This raises the question: what are these events mathematically? The answer, following the consistent histories approach, is projector decompositions.
The causal structure limits how projector decompositions can influence each other, which we can see in how they commute. 

Before concluding this chapter, we will demonstrate how the causal structure uniquely selects a set of projector decompositions, which will precisely correspond to a consistent set of histories. Given that the theory of causal balance is inherently stochastic, we will proceed to elucidate how probabilities are defined within this framework. The chapter will conclude with stating the proposition from \cite{nick-paper} that the theory of causal balance can reproduce any phenomenon from standard quantum theory.

\subsection{Interference Influences}
We promised that we will show how the theory of causal balance allows us to obtain a unique consistent set of histories. For now, it is sufficient to convince ourselves that we want a unique set of histories that appeals to a certain causal structure. To understand this structure, we need to understand one of the building blocks of the theory: interference influences.

In the quantum context, we say that system $A$ influences system $D$ in a unitary channel describing the dynamics if $D$ non-trivially depends on $A$, which is illustrated more formally in the following definition \cite{nick-paper}.

\begin{definition} (Quantum Causal Influence)
Let $U: A \otimes B \rightarrow C \otimes D$ be a unitary channel between the systems $A \otimes B$ and $C \otimes D$. Having no quantum causal influence from $A$ to $D$ is equivalent to the following diagram:
\begin{center}
    \begin{tikzpicture}
	\begin{pgfonlayer}{nodelayer}
		\node [style=none] (0) at (-16, 6) {};
		\node [style=none] (1) at (-16, 5) {};
		\node [style=none] (2) at (-14, 5) {};
		\node [style=none] (3) at (-14, 6) {};
		\node [style=none] (4) at (-15.75, 7) {};
		\node [style=none] (5) at (-14.25, 7) {};
		\node [style=none] (6) at (-15.75, 4) {};
		\node [style=none] (7) at (-14.25, 4) {};
		\node [style=none] (8) at (-15.75, 6) {};
		\node [style=none] (9) at (-14.25, 6) {};
		\node [style=none] (10) at (-15.75, 5) {};
		\node [style=none] (11) at (-14.25, 5) {};
		\node [style=none] (12) at (-15, 5.5) {$U$};
		\node [style=none] (13) at (-16, 4.25) {$A$};
		\node [style=none] (14) at (-14, 4.25) {$B$};
		\node [style=none] (15) at (-16, 6.5) {$C$};
		\node [style=none] (16) at (-14, 6.5) {$D$};
		\node [style=none] (17) at (-13, 5.5) {$=$};
		\node [style=none] (18) at (-12, 5) {};
		\node [style=none] (19) at (-12, 4) {};
		\node [style=none] (20) at (-11, 5) {};
		\node [style=none] (21) at (-10, 5) {};
		\node [style=none] (22) at (-10, 6) {};
		\node [style=none] (23) at (-11, 6) {};
		\node [style=none] (24) at (-10.5, 4) {};
		\node [style=none] (25) at (-10.5, 5) {};
		\node [style=none] (26) at (-10.5, 6) {};
		\node [style=none] (27) at (-10.5, 7) {};
		\node [style=none] (28) at (-10.5, 5.5) {$V$};
		\node [style=none] (29) at (-10.25, 4.25) {$B$};
		\node [style=none] (30) at (-10.25, 6.75) {$D$};
		\node [style=none] (31) at (-12.25, 5) {};
		\node [style=none] (32) at (-11.75, 5) {};
		\node [style=none] (33) at (-15.5, 7) {};
		\node [style=none] (34) at (-16, 7) {};
		\node [style=none] (35) at (-18.5, 5.5) {$\exists V$};
		\node [style=none] (36) at (-17.25, 5.5) {such that};
	\end{pgfonlayer}
	\begin{pgfonlayer}{edgelayer}
		\draw (0.center) to (3.center);
		\draw (3.center) to (2.center);
		\draw (2.center) to (1.center);
		\draw (1.center) to (0.center);
		\draw (8.center) to (4.center);
		\draw (5.center) to (9.center);
		\draw (6.center) to (10.center);
		\draw (7.center) to (11.center);
		\draw (23.center) to (26.center);
		\draw (26.center) to (22.center);
		\draw (22.center) to (21.center);
		\draw (21.center) to (25.center);
		\draw (25.center) to (20.center);
		\draw (20.center) to (23.center);
		\draw (26.center) to (27.center);
		\draw (25.center) to (24.center);
		\draw (19.center) to (18.center);
		\draw (31.center) to (18.center);
		\draw (18.center) to (32.center);
		\draw (4.center) to (33.center);
		\draw (4.center) to (34.center);
	\end{pgfonlayer}
\end{tikzpicture}

\end{center}
such that \begin{tikzpicture}
	\begin{pgfonlayer}{nodelayer}
		\node [style=none] (0) at (-16.25, 8) {};
		\node [style=none] (1) at (-15.75, 8) {};
		\node [style=none] (2) at (-16, 7.5) {};
		\node [style=none] (3) at (-16, 8) {};
	\end{pgfonlayer}
	\begin{pgfonlayer}{edgelayer}
		\draw (0.center) to (1.center);
		\draw (3.center) to (2.center);
	\end{pgfonlayer}
\end{tikzpicture}
 stands for the trace operation (``discard" for the readers familiar with string diagrams).
\end{definition}
Can we go deeper and be able to describe the influences from $A$ to $D$ in a more fine-grained way? Yes, we can! It was shown in \cite{nick-paper} that influences between systems are fully determined by influences between subsystems. In fact, as shown in \cite{nick-phd}, interference influences between systems are equivalent to influences between projector decompositions more formally presented in the following definition. 

\begin{definition}
\label{interference-influence}
Let $U: A \otimes B \rightarrow C \otimes D$ be a unitary channel. We say that there is no interference influence from the projector decomposition on the Hilbert space $\mathcal{H}_A$, $\{P_A^{i}\}$ to the projector decomposition on $\mathcal{H}_D$, $\{P_D^{j}\}$ if and only if
\begin{center}
    $[\{P_A^{i}\}, \{P_D^{j}\}] = 0$ for any $i$ and $j$.
\end{center}
\end{definition}

The reason we chose to also define interference influence in terms of the commutation of projector decompositions is that we will later see that it is related to the paradoxes via the commutation link. 

We can begin to see some of the similarities with the consistent sets of histories formalism. Indeed, in the theory of causal balance, events are seen as a unique selection of a projector $P \in \mathbf{D}$ such that $\mathbf{D}$ is a projector decomposition. This projector decomposition is selected in a way that maintains a certain causal balance between systems. In other words, we would only want a specific type of influences between the systems. However, we still have not defined how this causal balance is maintained. What is this magical causal structure that we want to have for which there is only one unique assignment of projector decompositions? The following theorem from \cite{nick-paper} defines how we can obtain the preferred projector decompositions by the theory.

\begin{theorem}
    Let $U: A \otimes B \rightarrow C \otimes D$ be a unitary channel, and let $\{P_A^{i}\}$ be a projective decomposition on $A$. For $\mathcal{A}$ and $\mathcal{D}$ denoting the operator algebras of the form $M_A \otimes I_B$ and $U^{\dagger} (I_C \otimes M_D)$,
    \begin{center}
        $\{P_A^{i}\}$ is preferred by $D \iff \text{span}(\{P_A^{i}\}) \otimes I_B = \mathcal{Z}(\mathcal{A} \cap \mathcal{D'}).$ 
    \end{center}
\end{theorem}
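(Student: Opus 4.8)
The plan is to prove both directions by translating ``preferred by $D$'' into the algebraic structure of the finite-dimensional operator algebra $\mathcal{A} \cap \mathcal{D}'$, using the structure proposition for finite-dimensional operator algebras together with the characterization of the commuting center in \Cref{commutant-of-algebra}. (Here I read $\mathcal{D}$ as the Heisenberg pullback $U^{\dagger}(I_C \otimes M_D)U$.) First I would note that $\mathcal{A} \cap \mathcal{D}'$ is itself a $\dagger$-closed operator algebra: $\mathcal{A}$ is an algebra by construction, the commutant $\mathcal{D}'$ is an algebra, and the intersection of algebras is an algebra. Applying the direct-sum-of-tensor-products decomposition and \Cref{commutant-of-algebra}, the commuting center $\mathcal{Z}(\mathcal{A}\cap\mathcal{D}')$ is a commutative algebra spanned by a unique family of minimal orthogonal central projectors $\{\pi_i\}$.

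The key structural observation is that, because $\mathcal{A}\cap\mathcal{D}' \subseteq \mathcal{A} = \{M_A \otimes I_B\}$, every element of the center has the form $\pi_i = P_A^i \otimes I_B$ for some projector $P_A^i$ on $\mathcal{H}_A$. Completeness ($\sum_i \pi_i = I_A \otimes I_B$, since $I \in \mathcal{A}\cap\mathcal{D}'$ and is central) together with orthogonality then shows that $\{P_A^i\}$ is a genuine projective decomposition on $\mathcal{H}_A$. So the center automatically supplies a canonical candidate decomposition, and the theorem asserts that this candidate is precisely the one preferred by $D$.

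Next I would unfold the definition of interference influence. By \Cref{interference-influence}, a decomposition $\{P_A^i\}$ has no interference influence to $D$ exactly when each $P_A^i \otimes I_B$ commutes with every operator in $\mathcal{D}$, i.e.\ when $P_A^i \otimes I_B \in \mathcal{D}'$; since such projectors trivially lie in $\mathcal{A}$, this is equivalent to $\{P_A^i \otimes I_B\} \subseteq \mathcal{A}\cap\mathcal{D}'$. For the ($\Rightarrow$) direction, granting that ``preferred'' singles out the canonical non-influencing decomposition, its projectors must generate the commuting center, forcing $\mathrm{span}(\{P_A^i\}) \otimes I_B = \mathcal{Z}(\mathcal{A}\cap\mathcal{D}')$. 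For ($\Leftarrow$), suppose the span equals the center. Since each orthogonal projector summing to $I$ inside a commutative algebra is a sum of the atoms $\pi_i$, and a partition into $m$ nonzero blocks spans an $m$-dimensional subalgebra, equality of dimensions forces each $P_A^i \otimes I_B$ to be a single atom $\pi_i$. Hence $\{P_A^i\}$ coincides with the minimal central projectors and is therefore preferred.

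The main obstacle is pinning down the definition of ``preferred by $D$'' and proving that the canonical non-influencing decomposition is generated by $\mathcal{Z}(\mathcal{A}\cap\mathcal{D}')$ rather than by some other sub-decomposition sitting inside $\mathcal{A}\cap\mathcal{D}'$. The subtlety is that $\mathcal{A}\cap\mathcal{D}'$ is in general \emph{non-commutative}, so it contains many projective decompositions; the real content of the theorem is that the preferred one is uniquely selected as the center, i.e.\ the maximal commutative sub-structure common to, and fixed by, the entire non-influencing algebra. Establishing this uniqueness is where the work lies, and it rests on the one-to-one correspondence, proved in \Cref{commutant-of-algebra}, between projectors in the center and the event space over $\Omega = \{\pi_i\}$.
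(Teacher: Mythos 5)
Your proposal has a genuine gap, and it is the one you yourself flag in your final paragraph: the notion ``preferred by $D$'' is never defined --- not in this paper (which states the theorem as an imported result from \cite{nick-paper}, with no proof to compare against) and not in your argument. Because of this, your ($\Rightarrow$) direction is circular: the step ``granting that `preferred' singles out the canonical non-influencing decomposition, its projectors must generate the commuting center'' simply assumes the conclusion. Everything before that step (that $\mathcal{A}\cap\mathcal{D}'$ is a $\dagger$-closed algebra, that its central atoms have the form $\pi_i = P_A^i\otimes I_B$ and form a projective decomposition of $\mathcal{H}_A$, and the dimension-counting in your ($\Leftarrow$) direction) is sound but preliminary; the proposal ends exactly where the proof would have to begin.

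The gap is not merely formal, because the reading your argument implicitly relies on --- ``preferred'' means ``finest decomposition with no interference influence on $D$'' --- is provably inadequate. By \Cref{interference-influence}, absence of interference influence from $\{P_A^i\}$ to $D$ is equivalent to $P_A^i\otimes I_B \in \mathcal{D}'$ for all $i$, i.e.\ to the projectors lying in $\mathcal{A}\cap\mathcal{D}'$. But when $\mathcal{A}\cap\mathcal{D}'$ is non-commutative (say it contains a full $2\times 2$ matrix factor), the central atom $\pi$ of that factor splits as $\pi = p + q$ with $p, q$ non-central projectors still inside $\mathcal{A}\cap\mathcal{D}'$; the resulting strictly finer decomposition also has no interference influence on $D$, yet its span is strictly larger than $Z(\mathcal{A}\cap\mathcal{D}')$. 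So ``non-influencing and maximally fine'' is neither unique nor equal to the center, and your appeal to the correspondence in \Cref{commutant-of-algebra} between central projectors and the event space cannot by itself exclude these refinements. A correct proof must state the actual definition of ``preferred by $D$'' from \cite{nick-paper} --- which has to contain some balance or invariance condition beyond mere non-influence --- and show that this extra condition is exactly what forces the projectors down into the center.
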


\subsection{Causal Structure}
Since we define causal structure on algebras, we provide the necessary background in \Cref{app:causalbalance}. We start with the following definition from \cite{nick-phd}.
\begin{definition}
\label{causal structure}
A causal structure is a directed graph $\mathfrak{C}$ over a finite set of algebras on a finite-dimensional Hilbert space $\mathcal{H}$ such that for two algebras $\tilde{\mathcal{X}}$ and $\tilde{\mathcal{Y}}$ in $\mathfrak{C}$,
\begin{center}
    $\tilde{\mathcal{X}}$ does not influence $\tilde{\mathcal{Y}}$ and $\tilde{\mathcal{Y}}$ does not influence $\tilde{\mathcal{X}}$ $\iff \tilde{\mathcal{X}} \subseteq \tilde{\mathcal{Y}}$.
\end{center}
\end{definition}

When we talk about a causal structure, we also talk about a bubble. Bubbles are also relevant since in most cases, we want to maintain the causal balance relative to a certain bubble of systems.
\begin{definition}
\label{bubble}
A bubble $\mathfrak{B}$ is a subset of systems in $\mathfrak{C}$. In a unitary circuit, a bubble is a subset of wires.
\end{definition}

The reader might be wondering if there is any structure to these graphs that maintains the causal balance between algebras, and the answer is positive. Consider an algebra $\mathcal{X}$ relative to a bubble $\mathfrak{B}$. We define the following two event spaces (see Appendix 2 for mathematical background on event spaces). 
\begin{itemize}
    \item \textbf{Future balanced event space}: $\mathcal{E}^{\uparrow}_{\tilde{\mathcal{X}} \mathfrak{B}} := \text{Proj} \cap Z(\tilde{\mathcal{X}} \cap \mathfrak{C}_{\mathfrak{B}}^{\uparrow}(\tilde{\mathcal{X}})')$,
    \item \textbf{Past-balanced event space}: $\mathcal{E}^{\downarrow}_{\tilde{\mathcal{X}} \mathfrak{B}} := \text{Proj} \cap Z(\tilde{\mathcal{X}} \cap \mathfrak{C}_{\mathfrak{B}}^{\downarrow}(\tilde{\mathcal{X}})')$,
\end{itemize}
such that $\mathfrak{C}_{\mathfrak{B}}^{\uparrow}(\tilde{\mathcal{X}})$ is the resulting algebra from combining all the algebras in the bubble $\mathfrak{B}$ that $\tilde{\mathcal{X}}$ influences. Similarly, $\mathfrak{C}_{\mathfrak{B}}^{\downarrow}(\tilde{\mathcal{X}})$ is the resulting algebra from combining all the algebras in the bubble $\mathfrak{B}$ that are influenced by $\tilde{\mathcal{X}}$.

If the symbols in the above expressions are confusing, we recommend looking at Appendix 2 on algebras and some of their properties. 

These expressions present exactly what we were looking for. Events spaces (which are projectors) that satisfy theorem  and hence maintain causal balance. Here we come to see how we took advantage of the time-symmetry mentioned earlier when it comes to having past- and future-balanced events. 

\subsection{From Algebras to Unitary Circuits}
Consider the following unitary circuit:
\begin{center}
    
\end{center}

Let's consider the set of system $\{A, B, C\}$. We call a set of systems in a circuit a bubble, and we will name this bubble composed of $A$, $B$, and $C$, $\mathfrak{B}_1$. Now, if we were to think about the projector decompositions that we can have at each system from $\mathfrak{B}_1$ in the context of the consistent histories formalism, one can have too many sets of histories corresponding to $\mathfrak{B}_1$ that are consistent. However, we want to pick one unique set of histories that appeals to a causal structure where all the projectors are causally balanced with respect to their future and their past. This is the reason why we decorated the unitary circuit with two projectors at each system. The $P_{X_{in}}^{i}$ are projectors that are causally balanced with respect to the system $X$'s interaction with its past within $\mathfrak{B}_1$, and $P_{X_{out}}^{i}$ are the causally balanced projectors with respect to the system $X$'s interaction with the future within $\mathfrak{B}_1$.

The definition of a causal structure in terms of algebras and influences between them might not be very intuitive and circuits can make understanding the theory better. We can take advantage of the fact that any unitary circuit defines a model in the theory: when we use unitary circuits, we automatically create a model for our theory as the causal structure is implicitly implied by the position of the system relevant to each other: systems at the top are influenced by systems at the bottom. However, it is indispensable to note that the theory of causal balance is defined independently from spacetime and so it is not evident to deduce that it means that systems at the top necessarily happen before systems at the top. It is more correct for us to think about spacetime emerging from the causal structure just like events \cite{nick-phd}. 

Moreover, it is crucial to remember that not all causal structures can be represented as unitary circuits. Still, unitary circuits do represent some models, and we will focus on those in this section for clarity purposes. 

If the bubble under study is composed of $n$ systems, then, $2n$ events take place represented by $n$ ongoing projections, and $n$ outgoing projections. In other words, each wire $k \in \{1, ..., n\}$ in the bubble has a pair $(\{P_{in}^{k}\}, \{P_{out}^{k}\})$ associated with it. The projector decompositions get selected using a preference algorithm that respects the causal structure. It has been proven that in the context of this interpretation, the only allowed interference influences are the ones from decompositions of the past to decompositions of the future, which is put more formally in theorem \ref{allowed-interferences} in the following section.

Eureka! We finally found out how we get the unique set of projectors associated with a bubble that represents events that \emph{might happen}. We would like to emphasize the latter since the projectors obtained are not events that will happen, these projectors decompositions only correspond to events that are possible to happen, not to events that are about to happen. As shown in \cite{nick-paper}, for a bubble $\mathfrak{B}$, the probability for the unique set that was selected to be realized for a circuit $\mathfrak{C}$ containing $\mathfrak{B}$ is
\begin{align*}
    p(e_1, e_1', ..., e_n, e_n') = \frac{1}{d} Tr\left({\tilde{P}_{A_{1}^{in}}^{e_1}}{\tilde{P}_{A_{1}^{out}}^{e_1'}}...{\tilde{P}_{A_{n}^{in}}^{e_n}}{\tilde{P}_{A_{n}^{out}}^{e_n}}\right).
\end{align*}

It has also been shown in \cite{nick-paper} that this probability rule recovers the exact predictions of quantum theory. 

To conclude our review of the theory of causal balance, we answer the anticipated question about reproducing standard quantum theory. In \cite{nick-paper}, it was proven that the theory of causal balance is able to recreate any phenomenon in standard quantum theory.

\end{document}